\newtheorem{all}{Proposition}
\newtheorem{proposition}[all]{Proposition}
\newtheorem{corollary}[all]{Corollary}
\newenvironment{proof}{\noindent
{\bf Proof:}}{$\dashv$}
\newenvironment{proof*}{\noindent
{\bf Proof:}}{$\dashv$}
\newcommand{\oomit}[1]{}
\font\hollow=msbm10
\newcommand{\Rset}{{\mbox{\hollow R}}}
\newcommand{\Iset}{{\mbox{\hollow I}}}
\newcommand{\Var}{\mathrm{Var}}
\newcommand{\VarA}{\mathrm{Var}_{:=}}
\newcommand{\interp}{\mathit{A}}
\newcommand{\jnterp}{\mathit{B}}
\newcommand{\knterp}{\mathit{C}}
\newcommand{\chop}{{}^{\frown}}
\newcommand{\sem}[1]{\left[\!\left[#1\right]\!\right]}
\def\fin{\mathrm{fin}}
\def\const{\mathrm{const}}
\def\loc{\mathrm{loc}}
\def\dom{\mathrm{dom}}
\newcommand{\rarrow}[1]{\stackrel{#1}{\longrightarrow}}
\def\terminate{\checkmark}
\def\extchoice{[]}
\newcommand{\BoxI}[1][0mu]{\mathord{
    \mbox{\rlap{$\mskip 4.9mu\mskip#1$\relax
        \raisebox{1.5pt}{$\scriptscriptstyle\mathrm{i}$}}}
    \Box}}
\newcommand{\BoxS}[1][0mu]{\mathord{
    \mbox{\rlap{$\mskip 4.9mu\mskip#1$\relax
        \raisebox{1.5pt}{$\scriptscriptstyle\mathrm{s}$}}}
    \Box}}
\newcommand{\Boxdot}{\BoxI^\circ}
\newcommand{\BoxSdot}{\BoxS^\circ}
\def\Esplit{\exists_{\|}}
\def\Asplit{\forall_{\|}}
\def\rvec#1{\overrightarrow{#1}}
\begin{document}

\title{Compositional Hoare-style Reasoning about Hybrid CSP\\
 in the Duration Calculus}
\author{Dimitar Guelev\thanks{Institute of Mathematics and Informatics, Bulgarian Academy of Sciences, e-mail:{\tt gelevdp@math.bas.bg}}, Shuling Wang and Naijun Zhan\thanks{State Key Laboratory of Computer Science, Institute of Software, Chinese Academy of Sciences, e-mail:{\tt \{wangsl,znj\}@ios.ac.cn}.}}

\maketitle

\begin{abstract}
Deductive methods for the verification of hybrid systems vary on the format of statements in correctness proofs. Building on the example of Hoare triple-based reasoning, we have investigated several such methods for systems described in Hybrid CSP, each based on a different assertion language, notation for time, and notation for proofs, and each having its pros and cons with respect to  expressive power, compositionality and practical convenience. In this paper we propose a new approach based on weakly monotonic time as the semantics for interleaving, the Duration Calculus (DC) with infinite intervals and general fixpoints as the logic language, and a new meaning for Hoare-like triples which unifies assertions and temporal conditions. We include a proof system for reasoning about the properties of systems written in the new form of triples that is complete relative to validity in DC.
\end{abstract}

\section{Introduction}

{\em Hybrid systems} exhibit combinations of discrete and
continuous evolution, the typical example being a continuous plant with discrete control. A number of abstract models and requirement specification languages have been proposed for the verification of hybrid systems, the commonest model being {\em hybrid automata}~\cite{DBLP:conf/hybrid/AlurCHH92,DBLP:conf/hybrid/MannaP92,DBLP:conf/lics/Henzinger96}. Hybrid CSP (HCSP)~\cite{Jifeng:1994:CHS:197600.197614,DBLP:conf/hybrid/ChaochenJR95} is a process algebra which extends CSP by constructs for continuous evolution described in terms of ordinary differential equations, with domain boundary- and communication-triggered interruptions. The mechanism of synchronization is message passing. Because of its compositionality, HCSP can be used to handle complex and open systems. Here follows an example of a simple generic HCSP description of a continuously evolving plant with discrete control:
\[
 \begin{array}{ll}
 &(\mathbf{while}\ \top\ \mathbf{do}\ \langle F(\dot x,x,u)=0\rangle\unrhd \mathit{sensor}!x\rightarrow\mathit{actuator}?u)\ \|
\\
&(\mathbf{while}\ \top\ \mathbf{do}\ (\mathbf{wait}\  d;\mathit{sensor}?s;\mathit{actuator}!C(s)))
 \end{array} \]
The plant evolves according to some continuous law $F$ that depends on a control parameter $u$. The controller samples the state of the plant and updates the control parameter once every $d$ time units.

In this paper we propose a Hoare-style proof system for reasoning about hybrid systems which are modelled in HCSP. The features of HCSP which are handled by the logic include communication, timing constraints, interrupts and continuous evolution governed by differential equations. Our proof system is based on the Duration Calculus (DC, \cite{ZHR91,ZH04}), which is a first-order real-time temporal logic and therefore enables the verification of HCSP systems for temporal properties. DC is an interval-based temporal logic. The form of the satisfaction relation in DC is
$I,\sigma\models\varphi$,
where $\varphi$ is a temporal formula, $I$ is an interpretation of the respective vocabulary over time, and $\sigma$ is a {\em reference interval} of real time, unlike {\em point-based} TLs, where a reference time point is used. The advantages of intervals stem from the  possibility to accommodate a complete execution of a process
and have reference to termination time points of processes as well as the starting points. Pioneering work on interval-based reasoning includes Allen's interval algebra and Halpern and Shoham's logic \cite{Allen:1983:MKT:182.358434,HS86}. ITLs have been studied in depth with respect to the various models of time by a number of authors, cf. e.g. \cite{DBLP:journals/jancl/GorankoMS04}. Since an interval can be described as the pair of its endpoints, interval-based logics are also
viewed as {\em two-dimensional} modal logics \cite{Ven91a,Ven91thesis}. Interval Temporal Logic (ITL) was first proposed and developed by Moszkowski \cite{Mos85,Mos86,CMZ} for discrete time, as a reasoning tool for digital circuits. DC can be
viewed as a theory in {\em real time} ITL. We use the infinite interval variant of DC which was proposed in \cite{ZDL95}, which allows
intervals whose right end is $\infty$
for modelling non-terminating behaviour. We include an operator for Kleene star in order to model iterative behaviour, to facilitate the handling of liveness properties. Axioms and proof rules about infinite time and Kleene star in DC can be found in \cite{WX04,GD04}.

Hoare-style proof systems are about proving {\em triples} of the form
$\{P\}\,\mathbf{code}\,\{Q\}$,
which stand for the {\em partial correctness} property $P(x)\wedge\mathbf{code}(x,x')\rightarrow Q(x)$. The meaning of triples generalizes to the setting of reactive systems in various ways, the common feature of them all being that $P$ and/or $Q$ are {\em temporal} properties. In our system $\mathbf{code}$ is a HCSP term, $P$ and $Q$ are written in DC. The intended meaning is
\begin{equation}\label{informalsemantics}
\begin{array}{p{4.3in}}
Given an infinite run which satisfies $P$ at some initial subinterval, $\mathbf{code}$ causes it to satisfy also $Q$ at the initial subinterval representing the execution of $\mathbf{code}$.
\end{array}
\end{equation}
The initial subinterval which is supposed to satisfy $P$, can as well be a degenerate ($0$-length) one. Then $P$ boils down to an assertion on the initial state. This interval can also be to the entire infinite run in question. In this case $P$ can describe conditions provided by the environment throughout runs. $Q$ is supposed to hold at an interval which exactly matches that of the execution of $\mathbf{code}$. In case $\mathbf{code}$ does not terminate, this would be the whole infinite run too. Using our DC semantics $\sem{.}$ for HCSP terms, the validity of $\{P\}\,\mathbf{code}\,\{Q\}$ is defined as the validity of
\[P\chop\top\Rightarrow\neg(\sem{\,\mathbf{code}\,}\wedge\neg (Q\chop\top))\]
at infinite intervals, which is equivalent to (\ref{informalsemantics}).

We exploit the form of triples to obtain a {\em compositional} proof system, with each rule corresponding to a basic HCSP construct. This forces proofs to follow the structure of the given HCSP term. Triples in this form facilitate assume-guarantee reasoning too. For instance,
\[\frac{\{A\}\,\mathbf{code}\,_1\{B\}\qquad\{B\}\,\mathbf{code}\,_2\{C\}}{\{A\}\,\mathbf{code}\,_1\|\,\mathbf{code}\,_2\{((B\chop\top)\wedge C)\vee(B\wedge(C\chop\top)\}}\]
where $\|$ denotes parallel composition, is an admissible rule in our proof system, despite not being among the basic rules. A detailed study of assume-guarantee reasoning about discrete-time reactive systems in terms of triples of a similar form with point-based temporal logic conditions can be found in \cite{DBLP:journals/toplas/AbadiL93}.

The main result about our proof system in the paper is its completeness relative to validity in DC.

\paragraph
{Structure of the paper} After brief preliminaries on HCSP, we propose a weakly-monotonic time semantics for it in terms of DC formulas and prove its equivalence to an appropriate operational semantics. Next we give our proof system and use the DC-based semantics to demonstrate its relative completeness. Finally we summarize a generalization of the approach where arbitrary fixpoints can be used instead of HCSP's tail recursion and the use of $\|$ and the respective rather involved proof rule can be eliminated. That turns out to require both the general fixpoint operator of DC \cite{DBLP:conf/csl/Pandya95} and the right-neighbourhood modality (cf. \cite{DBLP:series/eatcs/ChaochenH04}) to handle the meaning of $P$s in the presence of properly recursive calls. We conclude by discussing related work and make some remarks.
\section{Preliminaries}

\subsection{
Syntax and informal semantics of Hybrid CSP}
Process terms have the syntax

\vspace{2mm}
\noindent
{\small
$\begin{array}{lllp{2.4in}}
P, Q & ::= &
{\bf skip}\mid & do nothing;\\
&& x_1,\ldots, x_n :=e_1,\ldots, e_n  \mid & simultaneous assignment;\\
&& {\bf wait}\ d\mid {\bf await}\ b \mid &fixed time delay; wait until $b$ becomes true;\\
&& ch?x \mid ch!e \mid IO \mid  & input and output; communication-guarded choice;\\
&& \langle F(\dot x, x)=0\wedge b\rangle\mid & $x$ evolves according to $F$ as long as $b$ holds;\\
& &
\langle F(\dot x, x)=0\wedge b\rangle\unrhd IO\mid & evolve by $F$ until $\neg b$ or $IO$ becomes ready;\\
& & & terminate, if $\neg b$ is reached first;\\
& & & otherwise execute $IO$;\\
&& P;Q  \mid P\ \|\ Q \mid & sequential composition;
parallel composition\\
&& \mathbf{if}\ b\ \mathbf{then}\ P\ \mathbf{else}\ Q\mid P \sqcup Q \mid & conditional; internal non-deterministic choice;\\
& & \mu X.G & recursion.
\end{array}$
}
\vspace{2mm}

\noindent
In the above BNF, $IO$ has the following form:
\begin{equation}\label{IOsyntax}
ch_1?x_1\rightarrow P_1\extchoice\ldots\extchoice ch_k?x_k\rightarrow P_k\extchoice ch_{k+1}!e_{k+1}\rightarrow P_{k+1}\extchoice\ldots\extchoice ch_n!e_n\rightarrow P_n
\end{equation}
for some arbitrary $k$, $n$, $x_1,\ldots,x_k$, $e_{k+1},\ldots,e_n$ and some distinct $ch_1,\ldots,ch_n$.
$IO$ engages in one of the indicated communications as soon as a partner process becomes ready, and then proceeds as the respective $P_i$. In $\mu X.G$, $G$ has the syntax
\[
\begin{array}{ll}\label{guardedBNF}
G::= &H\mid \rvec{x}:=\rvec{e};P\mid\langle F(\dot x,x)=0\wedge b\rangle;P\mid \mathbf{if}\ b\ \mathbf{then}\ G\ \mathbf{else}\ G \\
&\mid G\sqcup G\mid G;P\mid \mu Y.G\mid H\| H
\end{array}
\]
where $H$ stands for arbitrary $X$-free terms, $Y\not=X$ and $P$ can be any process term. This restricts $X$ of $\mu X.G$ to be {\em guarded} in $G$ and rules out occurrences of $X$ of $\mu X.G$ in the scope of $\|$ in $G$. The communication primitives $ch?x$ and $ch!e$ are not mentioned in the syntax for $G$ as they are treated as derived in this paper. They can be assigned the role of guards which $\rvec{x}:=\rvec{e}$ has in (\ref{guardedBNF}). Obviously $X$ is guarded in the $P_1,\ldots,P_n$ of $IO$ as in (\ref{IOsyntax}) too. Below we focus on the commonest instance of $\mu$
\begin{equation}\label{whiledef}
\mathbf{while}\ b\ \mathbf{do}\ P\rightleftharpoons \mu X.\mathbf{if}\ b\ \mathbf{then}\ (P;X)\ \mathbf{else}\ \mathbf{skip}
\end{equation}
CSP's Kleene star $P^*\rightleftharpoons \mu X.( \mathbf{skip}\sqcup (P;X))$,
which stands for some unspecified number of successive executions of $P$, is handled similarly. We explain how our setting ports to general fixpoints, and some technical benefits from that, other than the obvious gain in expressive power, in a dedicated section.

\subsection{The Duration Calculus}
We use DC with infinite intervals as in \cite{ZDL95,WX04} and Kleene star. The reader is referred to \cite{ZH04} for
a comprehensive introduction. Here we summarize only the main notions for the sake of self-containedness. DC is a classical first-order predicate modal logic with one normal binary modality called {\em chop} and written $\chop$. The time domain is $\Rset^\infty=\Rset\cup\{\infty\}$. Satisfaction has the form $I,\sigma\models\varphi$ where $I$ is an interpretation of the non-logical symbols, $\sigma\in\Iset(\Rset^\infty)$, $\Iset(\Rset^\infty)=\{[t_1,t_2];t_1\in\Rset,t_2\in\Rset^\infty,t_1\leq t_2\}$. {\em Flexible} non-logical symbols depend on the reference intervals for their meaning. {\em Chop} is defined by the clause
\[\begin{array}{lll}
I,\sigma\models(\varphi\chop \psi) &\mbox{iff}& \mbox{either there exists a }t\in\sigma\setminus\{\infty\}\mbox{ such that }I,[\min\sigma,t]\models\varphi \\
& & \mbox{and }I,[t,\max\sigma]\models\psi, \mbox{ or }\max\sigma=\infty\mbox{ and }I,\sigma\models\varphi.
\end{array}\]
Along with the usual first-order non-logical symbols, DC features boolean valued {\em state variables}, which form boolean combinations called {\em state expressions}. The value $I_t(S)$ of a state expression $S$ is supposed to change between $0$ and $1$ only finitely many times in every bounded interval of time.
{\em Duration terms} $\int \! \! S$ take a state expression $S$ as the operand and evaluate according to the clause
\[\textstyle I_\sigma(\int \!\! {S} )=\int\limits_{\min\sigma}^{\max\sigma}I(S)(t)dt.\]
$\ell$ is used for $\int ({\bf 0}\Rightarrow{\bf 0})$ and always evaluates to the length of the reference interval, and $\lceil S\rceil$ stands for $\ell\not=0\wedge\int \!\! S=\ell$ to denote that $S$ holds almost everywhere and the interval is non-degenerate. $\lceil S\rceil^0$ denotes just $\int \!\! S=\ell$.

In Section \ref{muSection} we use the converse neighbourhood modality $\Diamond_l^c$ and the least fixpoint operator $\mu$. The former appears in {\em Neighbourhood Logic} and the corresponding system of DC, and is defined by the clause
\[I,\sigma\models\Diamond_l^c\varphi\mbox{ iff }I,[\min\sigma,t]\models\varphi\mbox{ for some }t\in\Rset\cup\{\infty\},t\geq \min\sigma.\]
Formulas $\mu X.\varphi$, where $X$ is a dedicated type of variable, are well formed only if $\varphi$ has no negative occurrences of $X$. To define the meaning of $\mu X.\varphi$, $\varphi$ is regarded as the defining formula of a operator of type ${\mathcal P}(\Iset(\Rset^\infty))\rightarrow{\mathcal P}(\Iset(\Rset^\infty))$, $A\mapsto \{\sigma\in\Iset(\Rset^\infty):I_X^A,\sigma\models\varphi\}$, and $I,\sigma\models\mu X.\varphi$ iff $\sigma$ appears in the least fixpoint of this operator, which happens to be monotonic by virtue of the syntactic condition on $\varphi$. Kleene star is defined in terms of $\mu$ by the clause
\[\models\varphi^*\rightleftharpoons\mu X.(\ell=0\vee\varphi\chop X).\]

\section{Operational semantics of HCSP}
\label{opsem}

\paragraph{\bf Ownership of variables}
We write $\Var(P)$ for the set of the program variables which occur in $P$. Expressions of the form $\dot x$ in continuous evolution process terms are, syntactically, just program variables, and are restricted not to appear in arithmetical expressions $e$ outside the $F(\dot x,x)$ of continuous evolution terms, or on the left hand side of $:=$.
As it becomes clear below, the dependency between $x$ and $\dot x$ as functions of time is spelled out as part of the semantics of continuous evolution.
We write $\VarA(P)$ for all the variables in $\Var(P)$ which occur on the left hand side of $:=$, in $ch?$ statements, and the $x$-es or $\dot x$-es in any of the forms of continuous evolution within $P$. Parallel composition $P\| Q$ is well-formed only if $\VarA(P)\cap\VarA(Q)=\emptyset$.

\paragraph{\bf Modelling input and output}
We treat $ch?x$ and $ch!e$ as derived constructs as they can be defined in terms of dedicated shared variables $ch?$, $ch!$ and $ch$ after \cite{DBLP:conf/icalp/OlderogH83}:
\footnote{Hoare style proof rules for a system with $ch?x$ and $ch!e$ appearing as primitive constructs were proposed by Zhou Chaochen \emph{et al.} in \cite{GWZZ13}. That work features a different type of triples and follows the convention that process variables are not observable across threads thus ruling out a shared-variable emulation of communication.}
\oomit{\[{\small\begin{array}{lll}
ch!e&\rightleftharpoons& ch:=e;\\
& & ch!:= \top;\\
& & \mathbf{await}\ ch?;\\
& & \mathbf{await}\ \neg ch?;\\
& & ch!:= \bot
\end{array}
\qquad
\begin{array}{lll}
ch?x&\rightleftharpoons& ch?:=\top;\\
& & \mathbf{await}\ ch!;\\
& & x:=ch;\\
& & ch?:=\bot;\\
& & \mathbf{await}\ \neg ch!
\end{array}}\]
}
\[\begin{array}{lll}
ch!e&\rightleftharpoons& ch:=e;
ch!:= \top;
\mathbf{await}\ ch?;
\mathbf{await}\ \neg ch?;
 ch!:= \bot\\
ch?x&\rightleftharpoons& ch?:=\top;
\mathbf{await}\ ch!;
x:=ch;
ch?:=\bot;
\mathbf{await}\ \neg ch!
\end{array}\]
We assume that $ch!,ch\in \VarA(ch!e)$ and $ch?\in\VarA(ch?x)$.
Communication-guarded external choice $IO$ can be defined similarly. We omit the definition as it is lengthy and otherwise uninsightful. The other derived constructs are defined as follows:

\vspace{2mm}
\noindent
{\small $\begin{array}{lll}
\langle F(\dot x,x)=0\wedge b\rangle\unrhd_d Q &\rightleftharpoons& t:=0;\langle F(\dot x,x)=0\wedge\dot t=1\wedge b\wedge t\leq d\rangle;\\
&&\mathbf{if}\ \neg\overline{(\neg b)} \ \mathbf{then}\ Q
\ \mathbf{else}\ \mathbf{skip}\\[1mm]
\mathbf{wait}\ d &\rightleftharpoons& \langle 0=0\wedge\top\rangle\unrhd_d\mathbf{skip}\\[1mm]
\langle F(\dot x,x)=0\wedge b\rangle\unrhd IO &\rightleftharpoons&\langle F(\dot x,x)=0\wedge b\wedge\bigwedge\limits_{i\in I}\neg ch_i^*\rangle;\mathbf{if}\  \bigwedge\limits_{i\in I}\neg ch_i^*  \ \mathbf{then}\ \mathbf{skip}\ \mathbf{else}\ IO\end{array}$
}
\vspace{2mm}

\noindent
Here $ch_i^*$ stands for $ch_i?$, resp. $ch_i!$, depending on whether the respective action in $IO$ is input or output. To account of the impossibility to mechanically (and computationally) tell apart $x<c$ from $x\leq c$ about time-dependent quantities $x$, in $\neg\overline{(\neg b)}$ we use $\overline{a}$ for a condition that defines the topological closure of $\{\mathbf{x}:a(\mathbf{x})\}$. It is assumed that $\overline{b}$ admits a syntactical definition. E.g., $\overline{x<c}$ is $x\leq c$ for $x$ being a continuous function of time.

\paragraph{Reduction of HCSP process terms}
Next we define a {\em reduction relation} $P\rarrow{\interp,V} Q$ where $V$ is a set of process variables and
\begin{equation}\label{interpformat}
\interp:\sigma\rightarrow(V'\cup\{r,n\}\rightarrow\Rset^\infty\cup\{0,1\}),
\end{equation}
where $V'$ is a set of process variables, $r$ and $n$ are boolean variables outside $V'$ and $\sigma\in\Iset$.
In $\Rset^\infty\cup\{0,1\}$ we emphasize the additional use of $0,1\in\Rset$ as truth values.
We consider $P\rarrow{\interp,V} Q$ only for $V$ such that $\VarA(P)\subseteq V\subseteq V'$.
For HCSP terms $P$ in the scope of a $Q$ which on its turn is an operand of a $\|$, with no other $\|$s between this one and $P$, the semantics of $P$ must specify the behaviour of all the variables
from $\VarA(Q)$, which are {\em controlled} by the {\em enveloping thread} $Q$ of $P$. $V'\setminus V$ is meant to include of the variables which are not controlled by the enveloping thread of $P$ but still may be accessed in it. In the sequel we write $\dom\interp$ for $\sigma$ and $\Var(\interp)$ for $V'$ from (\ref{interpformat}).

If $V\subseteq\Var(\interp)$, then $\interp|_V$ stands for the restriction of $\interp$ to the variables from $V$. I.e., given $\interp$ as in (\ref{interpformat}),
\[\interp|_V: \sigma\rightarrow(V\cup\{r,n\}\rightarrow\Rset^\infty\cup\{0,1\}).\]
Given an arithmetic or boolean expression $e$ such that $V(e)\subseteq V(\interp)$, we write $\interp_t(e)$ for the value of $e$ under $\interp$ at time $t\in\dom \interp$. Given $\interp$ and $\jnterp$ such that $\max\dom\interp=\min\dom\jnterp$, $\Var(\interp)=\Var(\jnterp)$ and $\interp_{\max\dom \interp}(x)=\jnterp_{\min\dom \jnterp}(x)$ for all $x\in \Var(\interp)\cup\{r,n\}$, $\interp;\jnterp$ is determined by the conditions $\dom \interp;\jnterp=\dom \interp\cup \dom \jnterp$, $(\interp;\jnterp)_t(x)=\interp_t(x)$ for $t\in\dom\interp$ and $(\interp;\jnterp)_t(x)=\jnterp_t(x)$ for $t\in\dom\jnterp$ for all $x\in\Var(\interp)\cup\{r,n\}$. A complete and possibly infinite behaviour of $P$ can be defined as $\interp_1;\interp_2;\ldots$ where $P_{i-1}\rarrow{\interp_i,V}P_i$, and $P_0=P$.

\paragraph{The auxiliary variables $r$ and $n$}
To handle the causal ordering of computation steps without having to account of the {\em negligibly} small time delays they contribute, we allow stretches of time in which continuous evolution is 'frozen', and which are meant to just keep apart time points with different successive variable values that have been obtained by computation. Intervals of negligible time are marked by the boolean variable $r$. $P$ (or any of its descendant processes) claims exclusive control over the process variables during such intervals, thus achieving atomicity of assignment. Time used for computation steps by {\em any} process which runs in parallel with $P$ or $P$ itself is marked by $n$. Hence $\interp_t(r)\leq\interp_t(n)$, $t\in\dom\interp$ always holds in the mappings (\ref{interpformat}). As it becomes clear below, each operand $P_i$ of a $P_1\|P_2$  has its own $r$, and no two such variables evaluate to $1$ at the same time, which facilitates encoding the meaning of $\|$ by conjunction. In processes with loops and no other recursive calls, the $r$s can be enumerated globally. More involved form of recursive calls require the $r$s to be quantified away.

This approach is known as the {\em true synchrony hypothesis}. It was introduced to the setting of DC in \cite{PD97} and was developed in \cite{GD02,GD04} where properties $\varphi$ of the overall behaviour of a process in terms of the relevant continuously evolving quantities are written $(\varphi/\neg N)$, {\em the projection of $\varphi$ onto state $\neg N$}, which holds iff $\varphi$ holds at the interval obtained by gluing the $\neg N$-parts of the reference one. The approach is alternative to the use of {\em super-dense chop} \cite{HZ96chopping}.

\subsection{The reduction rules}

To abbreviate conditions on $A$ in the rules which generate the valid instances of $P\rarrow{\interp,V}Q$ below, given an $X\subseteq \Var(\interp)$ and a boolean or arithmetical expression $e$, we put:
\[\begin{array}{lll}
\const(X,\interp) &\rightleftharpoons & \bigwedge\limits_{x\in X}(\forall t\in \dom\interp)(\interp_t(x)=\interp_{\min\dom\interp}(x))\\
\const^\circ(X,\interp) &\rightleftharpoons & \bigwedge\limits_{x\in X}(\forall t\in \dom\interp\setminus\{\max\dom\interp\})(\interp_t(x)=\interp_{\min\dom\interp}(x))\\
\oomit{\const(e,\interp,a) &\rightleftharpoons & (\forall t\in \dom\interp)(\interp_t(e)=a)\\
}
\const^\circ(e,\interp,a)& \rightleftharpoons & (\forall t\in \dom\interp\setminus\{\max\dom\interp\})(\interp_t(e)=a)
\end{array}\]
In these abbreviations ${}^\circ$ indicates that no restriction is imposed at $\max\dom\interp$.

\paragraph{Reduction of process terms of the primitive types}
The valid transitions which are specific to primitive process terms are given by the following rules:

\noindent
$\begin{array}{lp{5in}}
\\
\displaystyle\frac{\max\dom\interp=\min\dom\interp}{\mathbf{skip}\rarrow{\interp,V}\terminate}
\qquad
\displaystyle\frac{\begin{array}{l}
\const(V\setminus\{x_1,\ldots,x_n\},\interp)\\
\const^\circ(\{x_1,\ldots,x_n\},\interp)\\
\const^\circ(r\wedge n,\interp,1)\\
\interp_{\max\dom\interp}(x_i)=\interp_{\min\dom\interp}(e_i),\ i=1,\ldots,n\\
\max\dom\interp<\infty
\end{array}
}
{x_1,\ldots,x_n:=e_1,\ldots,e_n\rarrow{\interp,V}\terminate}
\\
\\
\displaystyle\frac{\begin{array}{l}
\const^\circ(F(\dot x,x),A,0)\\
\const^\circ(\interp_t(\dot x)-\frac{d}{dt}\interp_t(x),\interp,0)\\
[2mm]
\const^\circ(\neg r\wedge\neg n\wedge b,\interp,1)\\
\const(V\setminus\{\dot x,x\},\interp)
\end{array}
}
{\langle F(\dot x,x)=0\wedge b\rangle\rarrow{\interp,V}\langle F(\dot x,x)=0\wedge b\rangle}
\qquad
\displaystyle\frac{\begin{array}{l}
\interp_{\min\dom\interp}(b)=0\\
\max\dom\interp=\min\dom\interp
\end{array}}
{\langle F(\dot x,x)=0\wedge b\rangle\rarrow{\interp,V}\terminate}
\\
\\
\end{array}$

\noindent
The rule about simultaneous assignment states that assignment takes an interval of negligible time, 
with its enveloping thread claiming exclusive control over the process variables. All variables except the ones which are assigned are kept constant throughout.

The first rule about $\langle F(\dot x,x)=0\wedge b\rangle$ describes 'ordinary' (non-negligible) periods in which continuous evolution takes place within the boundary condition $b$. The second rule describes states in which $b$ has become false and therefore evolution terminates immediately.

\paragraph{Reduction of compound terms}

\qquad

\noindent
$\begin{array}{l}\\
\displaystyle\frac{
P\rarrow{\interp,V}P'\quad P'\not=\terminate}
{P;Q\rarrow{\interp,V}P';Q
}
\qquad
\displaystyle\frac{
P\rarrow{\interp,V}\terminate\quad\max\dom\interp<\infty
}
{P;Q\rarrow{\interp,V}Q
}
\qquad
\displaystyle\frac{P\rarrow{\interp,V}P'\quad \max\dom\interp=\infty}
{P;Q\rarrow{\interp,V}P'}
\\
\\
\displaystyle\frac{
P\rarrow{\interp,V}R\quad\interp_{\min\dom\interp}(b)=1
}
{\mathbf{if}\ b\ \mathbf{then}\ P\ \mathbf{else}\ Q\rarrow{\interp,V}R}
\qquad
\frac{
Q\rarrow{\interp,V} R\quad\interp_{\min\dom\interp}(b)=0
}
{\mathbf{if}\ b\ \mathbf{then}\ P\ \mathbf{else}\ Q\rarrow{\interp,V}R}
\\
\\
\displaystyle\frac{P\rarrow{A,V}P'}{P\sqcup Q\rarrow{A,V}P'}
\qquad
\displaystyle\frac{Q\rarrow{A,V}Q'}{P\sqcup Q\rarrow{A,V}Q'}
\\
\\
\displaystyle\frac{
[\mu X.P/X]P
\rarrow{\interp,V}Q
}
{\mu X.P\rarrow{\interp,V}Q}
\\
\\
\displaystyle\frac{
\begin{array}{l}
\Var(\interp)\cup\Var(\jnterp)\subseteq\Var(\knterp)\\
\knterp|_{\Var(\interp)\cup\{r,n\}}=\interp\\
\knterp|_{\Var(\jnterp)\cup\{r,n\}}=\jnterp\\
\const^\circ(\neg r\wedge \neg n,\knterp,1)\\
V_1\cap V_2=\emptyset\\
P\rarrow{\interp,V_1} \terminate\ Q\rarrow{\jnterp,V_2} \terminate
\end{array}
}
{P\|Q\rarrow{\knterp,V_1\cup V_2} \terminate}
\qquad
\displaystyle\frac{
\begin{array}{l}
\Var(\interp)\cup\Var(\jnterp)\subseteq\Var(\knterp)\\
\knterp|_{\Var(\interp)\cup\{r,n\}}=\interp\\
\knterp|_{\Var(\jnterp)\cup\{r,n\}}=\jnterp\\
\const^\circ(\neg r\wedge \neg n,\knterp,1)\\
V_1\cap V_2=\emptyset\\
P\rarrow{\interp,V_1} P'\ Q\rarrow{\jnterp,V_2} Q'\\
P'\not=\terminate,\ Q'\not=\terminate
\end{array}
}
{P\|Q\rarrow{\knterp,V_1\cup V_2} P'\|Q'}
\\
\\
\displaystyle\frac{
\begin{array}{l}
\Var(\interp)\cup\Var(\jnterp)\subseteq\Var(\knterp)\\
\knterp|_{\Var(\interp)\cup\{r,n\}}=\interp\\
\knterp|_{\Var(\jnterp)\cup\{r,n\}}=\jnterp\\
\const^\circ(\neg r\wedge \neg n,\knterp,1)\\
V_1\cap V_2=\emptyset\\
P\rarrow{\interp,V_1} P'\ Q\rarrow{\jnterp,V_2} \terminate\\
P'\not=\terminate
\end{array}
}
{P\|Q\rarrow{\knterp,V_1\cup V_2} P'}
\qquad
\displaystyle\frac{
\begin{array}{l}
\Var(\interp)\cup\Var(\jnterp)\subseteq\Var(\knterp)\\
\knterp|_{\Var(\interp)\cup\{r,n\}}=\interp\\
\knterp|_{\Var(\jnterp)\cup\{r,n\}}=\jnterp\\
\const^\circ(\neg r\wedge \neg n,\knterp,1)\\
V_1\cap V_2=\emptyset\\
P\rarrow{\interp,V_1} \terminate\ Q\rarrow{\jnterp,V_2} Q'\\
Q'\not=\terminate
\end{array}
}
{P\|Q\rarrow{\knterp,V_1\cup V_2} Q'}
\\
\\
\displaystyle\frac{
\begin{array}{l}
V\subseteq V'\\
\const(V'\setminus V,\jnterp)\\
V'\cup\Var(\interp)\subseteq\Var(\jnterp)\\
\jnterp|_{\Var(\interp)\cup\{r,n\}}=\interp\\
\const^\circ(r\wedge n,\interp,1)\\
P\rarrow{\interp,V} P'\quad P'\not=\terminate
\end{array}
}
{P\|Q\rarrow{\jnterp,V'} P'\|Q}
\qquad
\displaystyle\frac{
\begin{array}{l}
V\subseteq V'\\
\const(V'\setminus V,\jnterp)\\
V'\cup\Var(\interp)\subseteq\Var(\jnterp)\\
\jnterp|_{\Var(\interp)\cup\{r,n\}}=\interp\\
\const^\circ(r\wedge n,\interp,1)\\
Q\rarrow{\interp,V} Q'\quad Q'\not=\terminate
\end{array}
}
{P\|Q\rarrow{\jnterp,V'} P\|Q'}
\qquad
\displaystyle\frac{
\begin{array}{l}
V\subseteq V'\\
\const(V'\setminus V,\jnterp)\\
V'\cup\Var(\interp)\subseteq\Var(\jnterp)\\
\jnterp|_{\Var(\interp)\cup\{r,n\}}=\interp\\
\const^\circ(r\wedge n,\interp,1)\\
P\rarrow{\interp,V} \terminate
\end{array}
}
{P\|Q\rarrow{\jnterp,V'} Q}
\qquad
\displaystyle\frac{
\begin{array}{l}
V\subseteq V'\\
\const(V'\setminus V,\jnterp)\\
V'\cup\Var(\interp)\subseteq\Var(\jnterp)\\
\jnterp|_{\Var(\interp)\cup\{r,n\}}=\interp\\
\const^\circ(r\wedge n,\interp,1)\\
Q\rarrow{\interp,V} \terminate
\end{array}
}
{P\|Q\rarrow{\jnterp,V'} P}
\\
\\
\end{array}$

\section{A DC semantics of Hybrid Communicating Sequential Processes}
\label{adcsemanticsforHCSP}

Given a process $P$, $\sem{P}$, with some subscripts to be specified below, is a DC formula which defines the class of DC interpretations that represent runs of $P$.

\paragraph{\bf Process variables and their corresponding DC temporal variables}
Real-valued process variables $x$ are modelled by a pairs of DC temporal variables $x$ and $x'$, which are meant to store the value of $x$ at the beginning and at the end of the reference interval, respectively. The axiom
\[\Box\forall z\neg(x'=z\chop x\not=z).\]
entails that the values of $x$ and $x'$ are determined by the beginning and the end point of the reference interval, respectively. It can be shown that
\[\begin{array}{l}
\models_{DC}\Box\forall z\neg(x'=z\chop x\not=z)\Rightarrow \\
\quad \Box(x=c\Rightarrow\neg(x\not=c\chop \top))\wedge\Box(x'=c\Rightarrow\neg(\top\chop x'\not=c))
\end{array}\]
This is known as the locality principle in ITL about $x$. About primed variables $x'$, the locality principle holds wrt the endpoints of reference intervals. Boolean process variables are similarly modelled by propositional temporal letters. For the sake of brevity we put
\[
 \loc(X)\rightleftharpoons\bigwedge\limits_{x\in X\atop x\ \mbox{\tiny is real} }\Box\forall z\neg(x'=z\chop x\not=z)\wedge \bigwedge\limits_{x\in X\atop x\ \mbox{\tiny is boolean} }\Box\neg((x'\chop \neg x)\vee(\neg x'\chop x))
 \]
In the sequel, given a DC term $e$ or formula $\varphi$ written using only unprimed variables, $e'$ and $\varphi'$ stand for the result of replacing all these variables by their respective primed counterparts.

\paragraph{Time derivatives of process variables}
As mentioned above, terms of the form $\dot x$ where $x$ is a process variable are treated as
distinct process variables and are modelled by their respective temporal variables $\dot x$ and $\dot x'$. The requirement on $\dot x$ to be interpreted as the time derivative of $x$ is incorporated in the semantics of continuous evolution statements.

\paragraph{Computation time and the parameters $\sem{.}$}
As explained in Section \ref{opsem}, we allow stretches of time that are dedicated to computation steps and are marked by the auxiliary boolean process variable $r$. Such stretches of time are conveniently excluded when calculating the duration of process execution. To this end, in DC formulas, we use a state variable $R$ which indicates the time taken by computation steps by the reference process. Similarly, a state variable $N$ indicates time for computation steps by which any process that runs in parallel with the reference one, including the reference one. $R$ and $N$ match the auxiliary variables $r$ and $n$ from the operational semantics and, just like $r$ and $n$, are supposed to satisfy the condition $R\Rightarrow N$. We assume that all continuous evolution becomes temporarily suspended during intervals in which computation is performed, with the relevant real quantities and their derivatives remaining frozen. To guarantee the atomicity of assignment, computation intervals of different processes are not allowed to overlap. As it becomes clear in the DC semantics of $\|$ below, $P_i$ of $P_1\|P_2$ are each given its own variable $R_i$, $i=1,2$, to mark computation time, and $R_1$ and $R_2$ are required to satisfy the constraints $\neg(R_1\wedge R_2)$ and $R_1\vee R_2\Leftrightarrow R$ where $R$ is the variable which marks computation times for the whole of $P_1\|P_2$.

The semantics $\sem{P}_{R,N,V}$ of a HCSP term $P$ is given in terms of the DC temporal variables which correspond to the process variables occurring in $P$, the state variables $R$ and $N$, and the set of variables $V$ which are controlled by $P$'s immediately enveloping $\|$-operand.

\paragraph{Assignment} 
To express that the process variables from $X\subseteq V$ may change at the end of the reference interval only, and those from $V\setminus X$ remain unchanged, we write
{\small \[
\begin{array}{ll}
\const(V,X)\rightleftharpoons&
\bigwedge\limits_{x\in V\setminus X\atop x\ \mbox{\tiny is real}}\Box(x'= x)\wedge\bigwedge\limits_{x\in V\setminus X\atop x\ \mbox{\tiny is boolean}}\Box(x\Leftrightarrow x')\wedge\\
&
\bigwedge\limits_{x\in X\atop x\ \mbox{\tiny is real}}\Boxdot(x'= x)\wedge\bigwedge\limits_{x\in X\atop\ x\ \mbox{\tiny is boolean}}\Boxdot(x'\Leftrightarrow x).
\end{array}\]}
The meaning of simultaneous assignment is as follows:
\[\begin{array}{ll}\label{asssem}
  \textstyle
\sem{x_1,\ldots,x_n:=e_1,\ldots,e_n}_{R,N,V}\rightleftharpoons&
\lceil R\rceil_\fin\wedge \const(V,\{x_1,\ldots,x_n\})\wedge\\
&\bigwedge\limits_{i=1,\ldots,n\atop x_i\ \mbox{\tiny is real}}x_i'=e_i\wedge\bigwedge\limits_{i=1,\ldots,n\atop x_i\ \mbox{\tiny is boolean}}x_i'\Leftrightarrow e_i.
\end{array}
\]

\paragraph{Parallel composition}
Consider processes $P_1$ and $P_2$ and $V\supseteq \VarA(P_1\|P_2)$. Let $\overline{1}=2$, $\overline{2}=1$.
Let
\[{\small
\Esplit(R,R_1,R_2,V,P_1,P_2)\varphi\rightleftharpoons
\exists R_1\exists R_2
\left(\begin{array}{l}
\lceil (R_1\vee R_2\Leftrightarrow R)\wedge\neg(R_1\wedge R_2)\rceil^0\wedge\\
\bigwedge\limits_{i=1}^2\Box(\lceil R_i\rceil\Rightarrow\const(V\setminus \VarA(P_i)))
\wedge
\varphi
\end{array}\right).}\]
$\Esplit(R,R_1,R_2,V_1,V_2)$ means that
\begin{itemize}
\item the $R$-subintervals for the computation steps of $P_1\|P_2$ can be divided into $R_1$- and $R_2$-subintervals to mark the computation steps of some sub-processes $P_1$ and $P_2$ of $P$ which run in parallel;
\item the variables which are not controlled by $P_i$ remain unchanged during $P_i$'s computation steps,  $i=1,2$, and, finally,
\item some property $\varphi$, which can involve $R_1$ and $R_2$, holds.
\end{itemize}
The universal dual $\Asplit$ of $\Esplit$ is defined in the usual way.
Let $V_i$ abbreviate $\VarA(P_i)$. Now we can define $\sem{P_1\|P_2}_{R,N,V}$ as
{\small\begin{equation}\label{parsem}
\Esplit(R,R_1,R_2,V,P_1,P_2)
\bigvee\limits_{i=1}^2\left(\begin{array}{l}
\sem{P_i}_{R_i,N,V_i}\wedge(\lceil N\wedge \neg R_{\overline{i}}\rceil^0_\fin \chop \sem{P_{\overline{i}}}_{R_{\overline{i}},N,V_{\overline{i}}}\chop \lceil\neg R_{\overline{i}}\rceil^0)\vee\\
(\lceil N\wedge \neg R_i\rceil^0_\fin \chop \sem{P_i}_{R_i,N,V_i})\wedge(\sem{P_{\overline{i}}}_{R_{\overline{i}},N,V_{\overline{i}}}\chop \lceil \neg R_{\overline{i}}\rceil^0)
\end{array}\right)
\end{equation}}
To understand the four disjunctive members of $\Phi$ above, note that $P_1\|P_2$ always starts with some action on behalf of either $P_1$, or $P_2$, or both, in the case of continuous evolution. Hence (at most) one of $P_i$, $i=1,2$, needs to allow negligible time for $P_{\overline{i}}$'s first step. This is expressed by a $\lceil N\wedge\neg R_i\rceil^0_\fin$ before $\sem{P_i}_{R_i,N,V_i}$. The amount of time allowed is finite and may be $0$ in case both $P_1$ and $P_2$ start with continuous evolution in parallel. This makes it necessary to consider two cases, depending on which process starts first. If $P_{\overline{i}}$ terminates before $P_i$, then a $\lceil N\wedge\neg R_i\rceil^0$ interval follows $\sem{P_{\overline{i}}}_{R_{\overline{i}},N,V_{\overline{i}}}$. This generates two more cases to consider, depending on the value of $i$.

The definitions of $\sem{x_1,\ldots,x_n:=e_1,\ldots,e_n}_{R,N,V}$ and $\sem{P_1\|P_2}_{R,N,V}$ already appear in (\ref{asssem}) and (\ref{parsem}). Here follow the definitions for the rest of the basic constructs:

\qquad

\noindent
{\small
$\begin{array}{lll}
\oomit{\sem{\terminate}_{R,N,V} & \rightleftharpoons & \const(V)\wedge\lceil\neg R\rceil^0\\
[5mm]
\sem{\mathbf{skip}}_{R,N,V} & \rightleftharpoons &
\ell=0\\
[5mm]}
\sem{\mathbf{await}\ b}_{R,N,V} & \rightleftharpoons & \const(V)\wedge(\lceil\neg R \rceil\vee\ell=0)\wedge\Boxdot\neg \overline{b'}\wedge(\overline{b'}\vee\ell=\infty)\\
\sem{\langle F(\dot x,x)=0\wedge b\rangle}_{R,N,V} & \rightleftharpoons &
 \left(\begin{array}{l}
\const(V\setminus\{\dot x,x\})\wedge\lceil\neg R\rceil\wedge\\
\Box\left(\begin{array}{l}
\lceil N\rceil\Rightarrow\const(\{\dot x,x\})\wedge\\
\forall\mathit{ub}\Box(\dot x\leq\mathit{ub})\Rightarrow x'\leq x+\mathit{ub}\int \!\!\neg N\wedge\\
\forall\mathit{lb}\Box(\dot x\geq\mathit{lb})\Rightarrow x'\geq x+\mathit{lb}\int \!\!\neg N\wedge\\
F(\dot x,x)=0
\end{array}\right)\wedge
\BoxSdot b
\end{array}\right)\\
&&\chop (\neg b\wedge\ell=0) \\
[5mm]
\sem{P;Q}_{R,N,V} & \rightleftharpoons & (\sem{P}_{R,N,V}\chop \lceil N \wedge
\neg R\rceil^0_\fin\chop \sem{Q}_{R,N,V})\\
\sem{P\sqcup Q}_{R,N,V} & \rightleftharpoons & \sem{P}_{R,N,V}\vee\sem{Q}_{R,N,V}\\
[5mm]
\sem{\mathbf{if}\ b\ \mathbf{then}\ P\ \mathbf{else}\ Q}_{R,N,V} & \rightleftharpoons & (b\wedge \sem{P}_{R,N,V})\vee (\neg b\wedge \sem{Q}_{R,N,V})\\
[5mm]
\sem{\mathbf{while}\ b\ \mathbf{do}\ P}_{R,N,V} & \rightleftharpoons & (b\wedge\sem{P}_{R,N,V}\chop \lceil\neg R\rceil^0)^*\chop(\neg b\wedge\ell=0)\\
\end{array}$
}

\noindent
To understand $\sem{\langle F(\dot x,x)=0\wedge b\rangle}_{R,N,V}$, observe that, assuming $I$ to be the DC interpretation in question and $\lambda t.I_t(\dot x)$ to be continuous, the two inequalities in
$\sem{\langle F(\dot x,x)=0\wedge b\rangle}_{R,N,V}$
express that
\[\textstyle I_{t_2}(x)-I_{t_1}(x)=\int\limits_{t_1}^{t_2} I_t(\dot x)(1-I_t(N))dt\]
at all finite subintervals $[t_1,t_2]$ of the reference intervals. This means that both $\dot x$ and $x$ are constant in $N$-subintervals, and $I_{t_2}(x)-I_{t_1}(x)=\int\limits_{t_1}^{t_2}I_t(\dot x)dt$ holds at $\neg N$-subintervals.

\paragraph{Completeness of $\sem{.}$}
Given a process term $P$, every DC interpretation $I$ for the vocabulary of $\sem{P}_N,N,V$ represents a valid behaviour of $P$ with $N$ being true in the subintervals which $P$ uses for computation steps. To realize this, consider HCSP terms  $P$, $Q$ of the syntax
\begin{equation}\label{gnf}
\begin{array}{lll}
P, Q &::= & {\bf skip}  \mid A; R\mid []_i A_i;R_i \mid  \mathbf{if}\ b\ \mathbf{then}\ P\ \mathbf{else}\ Q \mid   P \sqcup Q\\
A &::= & x :=e\mid{\bf await}\ b \mid
\langle F(\dot x, x)=0\wedge b\rangle
\end{array}
\end{equation}
where $R$ and $R_i$ stand for a process term with no restrictions on its syntax (e.g., occurrences of $\mathbf{while}$-terms are allowed). (\ref{gnf}) is the {\em guarded normal form (GNF)} for HCSP terms, with the guards being the primitive terms of the form $A$, and can be established by induction on the construction of terms, with suitable equivalences for each combination of guarded operands that $\|$ may happen to have. E.g.,
\begin{equation}\label{examplegnf0}
\langle F_1(\dot x, x)=0\wedge b_1\rangle;P_1\|\langle F_2(\dot x, x)=0\wedge b_2\rangle;P_2
\end{equation}
is equivalent to
\begin{equation}\label{examplegnf}
\begin{array}{ll}
\langle F_1(\dot x, x)=0\wedge F_2(\dot x, x)=0\wedge b_1\wedge b_2\rangle;\\
 \mathbf{if}\ b_1\ \mathbf{then}\ \langle F_1(\dot x, x)=0\wedge b_1\rangle;P_1\|P_2\\
 \mathbf{else}\ \mathbf{if}\ b_2\ P_1\|\langle F_2(\dot x, x)=0\wedge b_2\rangle;P_2\\
 \mathbf{else}\ P_1\|P_2
\end{array}
\end{equation}
Note that $[]_i A_i;R_i$ is a modest generalization of $IO$ as defined in (\ref{IOsyntax}). Some combinations of operands of $\|$ require external choice to be extended this way, with the intended meaning being that if none of the $A_i$s which have the forms $ch?x$ and $ch!e$ is ready to execute, then some other option can be pursued immediately. For example, driving $\|$ inwards may require using that
\[\begin{array}{l}
(ch_1?x\rightarrow P_1[] ch_2!e\rightarrow P_2 [] ch_3?y\rightarrow P_3) \| ch_1! f;Q_1\| ch_2? z;Q_2\ \equiv\\
\qquad ((x:=f;(P_1\|Q_1)\|ch_2? z;Q_2) \sqcup (z:=e;(P_2\| Q_2)\|ch_1! f;Q_1))[] \\
\qquad ch_3? y;P_3\| ch_1! f;Q_1\| ch_2? z;Q_2\ .
\end{array}\]
On the RHS of $\equiv$ above, one of the assignments and the respective subsequent process are bound to take place immediately in case the environment is not ready to communicate over $ch_3$.

The GNF renders the correspondence between the semantics of guards and the $\interp$s which appear in the rules for $\rarrow{A,V}$ explicit, thus making obvious that any finite prefix of a valid behaviour satisfies some {\em chop}-sequence of guards that can be generated by repeatedly applying the GNF a corresponding number of times and then using the distributivity of {\em chop} over disjunction, starting with the given process term, and then proceeding to transform the $R$-parts of guarded normal forms that appear in the process. The converse holds too. This entails that the denotational semantics is equivalent to the operational one.

\section{Reasoning about Hybrid Communicating Sequential Processes with DC Hoare triples}
\label{hoaresystem}

We propose reasoning in terms of triples of the form
\begin{equation}\label{htriple}
\{A\}P\{G\}_V
\end{equation}
where $A$ and $G$ are DC formulas, $P$ is a HCSP term, and $V$ is a set of program variables. $V$ is supposed to denote the variables whose evolution needs to be specified in the semantics of $P$, e.g., an assignment $x:=e$ in $P$ is supposed to leave the values of the variables $y\not=x$ unchanged. This enables deriving, e.g., $\{y=0\}x:=1\{y=0\}_{\{x,y\}}$, which would be untrue for a $y$ that belongs to a process that runs in parallel with the considered one and is therefore not a member of $V$. Triple (\ref{htriple}) is valid, if
\begin{equation}\label{triplesem}
\textstyle
\models
\loc(V)\wedge\lceil R\Rightarrow N\rceil^0
\wedge (A\chop \top)\Rightarrow\neg(\sem{P}_{R,N,V}\wedge\neg G\chop \top)
\end{equation}
Since $R$ and $N$ typically occur in $\sem{P}_{R,N,V}$, triples (\ref{htriple}) can have occurrences of $R$ and $N$ in $A$ and $G$ too, with their intended meanings.

Next we propose axioms and rules for deriving triples about processes with each of the HCSP constructs as the main one in them. For $P$ of one of the forms $\mathbf{skip}$,
$x_1,\ldots,x_n:=e_1,\ldots,e_n$, and $\langle{\mathcal F}(\dot x, x)=0\wedge b\rangle$, we introduce the axioms
\[\{\top\}P\{\sem{P}_{R,N,V}\}_V.\]
where $V$ can be any set of process variables such that $V\supseteq \VarA(P)$.
Here follow the rules for reasoning about processes which are built using each of the remaining basic constructs:
\[{\small\begin{array}{ll}
(\mathbf{seq})&
\oomit{
\displaystyle
\frac{
\begin{array}{l}
\{A\}P\{G\}_V
\\
\{B\}Q\{H\}_V\\
\loc(V)\wedge\lceil R\Rightarrow N\rceil^0\wedge\Diamond_l^c A\Rightarrow\neg (G\wedge\ell<\infty\chop \neg\Diamond_l^c(\lceil N\wedge\neg R\rceil^0 \chop B))
\end{array}
}{
\{A\}P;Q\{(G\chop \lceil N\wedge\neg R\rceil^0 \chop H)\}_V
}
}
\displaystyle
\frac{
\begin{array}{l}
\{A\}P\{G\}_V\qquad \{B\}Q\{H\}_V\\
\loc(V)\wedge\lceil R\Rightarrow N\rceil^0\wedge (A\chop \top)\Rightarrow\neg(G\chop \neg(\lceil N\wedge\neg R\rceil^0\chop B\chop \top))
\end{array}}{
\{A\}P;Q\{(G\chop \lceil N\wedge\neg R\rceil^0 \chop H)\}_V
}
\\
[3mm]
(\sqcup)&
\displaystyle
\frac{
\{A\}P\{G\}_V\qquad\{B\}Q\{H\}_V
}{
\{A\wedge B\}P\sqcup Q\{G\vee H\}_V
}\\
[3mm]
(\mathbf{if})&
\displaystyle
\frac{
\{A\wedge b\}P\{G\}_V\qquad\{A\wedge\neg b\}Q\{G\}_V
}{
\{A\}\mathbf{if}\ b\ \mathbf{then}\ P\ \mathbf{else}\ Q\{G\}_V
}\\
[3mm]
(\mathbf{while})&
\oomit{
\displaystyle
\frac{
\{A\}P\{G\}_V\qquad
\loc(V)\wedge\lceil R\Rightarrow N\rceil^0\wedge (A\chop \top)\Rightarrow\neg(G\chop \neg(\lceil N\wedge\neg R\rceil^0\chop A\chop \top))
}{
\{A\}\mathbf{while}\ b\ \mathbf{do}\ P\left\{\begin{array}{l}
(\ell = 0\wedge\neg b)\vee\\
((b\wedge G\chop \lceil\neg R\rceil^0 )^*\chop b\wedge G\chop \neg b\wedge\ell=0)
\end{array}\right\}_V
}
}
\displaystyle{
\frac{
\begin{array}{l}
   \{A\}P\{G\}_V\\
   \loc(V)\wedge\lceil R\Rightarrow N\rceil^0\wedge(A\chop \top) \Rightarrow\neg(G\wedge\ell<\infty\chop \neg\Diamond_l^c(\lceil\neg R\rceil^0\chop A))
\end{array}
}{
\{A\}\mathbf{while}\ b\ \mathbf{do}\ P\{((b\wedge G\chop \lceil\neg R\rceil^0)^*\chop \neg b\wedge\ell=0)
\}_V
}
}
\\
\end{array}}
\]

\paragraph{Parallel composition}
The established pattern suggests the following proof rule $(\|)$:

\[{\small\begin{array}{ll}
\displaystyle
\frac{
\{A_1\}P_1\{G_1\}_{\VarA(P_1)}\qquad\{A_2\}P_2\{G_2\}_{\VarA(P_2)}
}{
\begin{array}{l}
\left\{\Asplit(R,R_1,R_2,V,P_1,P_2)\left(
\bigvee\limits_{i=1}^2\neg(\lceil N\wedge\neg R_i\rceil^0_\fin \chop \neg [R_i/R]A_i\chop \top)\wedge ([R_{\overline{i}}/R]A_{\overline{i}}\chop \top)
\right)
\right\}\\
P_1\|P_2\\
\left\{\Esplit(R,R_1,R_2,V,P_1,P_2)\left(
\bigvee\limits_{i=1}^2
\begin{array}{l}
G_i\wedge(\lceil N\wedge\neg R_{\overline{i}}\rceil^0_\fin\chop G_{\overline{i}}\chop \lceil \neg R_{\overline{i}}\rceil^0)\\
\vee\\
(\lceil N\wedge \neg R_i\rceil^0_\fin\chop G_i)\wedge(G_{\overline{i}}\chop \lceil \neg R_{\overline{i}}\rceil^0)
\end{array}\right)
\right\}_V
\end{array}
}
\end{array}}\]

This rule can be shown to be complete as it straightforwardly encodes the semantics of $\|$. However, it is not convenient for proof search as it only derives triples with a special form of the condition on the righthand side and actually induces the use of $\sem{P_i}_{R_i,N,\VarA(P_i)}$ as $G_i$, which typically give excess detail. We discuss a way around this inconvenience below, together with the modifications of the setting which are needed in order to handle general HCSP fixpoints $\mu X.P$.
\paragraph{General rules}
Along with the process-specific rules, we also need the rules
\[(N)\qquad\frac{\loc(V)\wedge\lceil R\Rightarrow N\rceil^0\wedge\Diamond_l^c A\Rightarrow  G}{\{A\}P\{G\}_V}\qquad\VarA(P)\subset V\]
\[({\bf K})\qquad\frac{
\{A\}P\{G\Rightarrow H\}_V \quad \{B\}P\{G\}_V \quad \loc(V)\wedge\lceil R\Rightarrow N\rceil^0\wedge\Diamond_l^c A\Rightarrow\Diamond_l^c B}{\{A\}P\{H\}_V}\]
These rules are analogous to the modal form $N$ of G\"odel's generalization rule (also known as the necessitation rule) and the modal axiom ${\bf K}$.

\subsubsection*{Soundness and Completeness}

The soundness of the proof rules is established by a straightforward induction on the construction of proofs with the definition of $\sem{.}_{R,N,V}$. The system is also complete relative to validity in DC. This effectively means that we allow all valid DC formulas as axioms in proofs, or, equivalently, given some sufficiently powerful set of proof rules and axioms for the inference of valid {\em DC formulas} in DC with infinite intervals, our proof rules about triples suffice for deriving all the {\em valid triples}. Such systems are beyond the scope of our work. A Hilbert-style proof system for ITL with infinite intervals was proposed and shown to be complete with respect to an abstractly defined class of time models (linearly ordered commutative groups) in \cite{WX04}, building on similar work about finite intervals from \cite{Dut95}.

The deductive completeness of our proof system boils down to the possibility to infer triples of the form $\{\top\}P\{G\}_V$ for any given term $P$ and a certain strongest corresponding $G$, which turns out to be the DC formula $\sem{P}_{R,N,V}$ that we call the semantics of $P$. Then the validity of $\{\top\}P\{\sem{P}_{R,N,V}\}_V$ is used to derive any valid triple about $P$ by a simple use of the proof rules ${\bf K}$ and $N$. The completeness now follows from the fact that $\sem{P}_{R,N,V}$ defines the class of all valid behaviours of $P$.

\begin{proposition}\label{prop1}
The triple
\begin{equation}\label{semtriple}
\{\top\}P\{\sem{P}_{R,N,V}\}_V
\end{equation}
is derivable for all process terms $P$ and all $V$ such that $\VarA(P)\supseteq V$.
\end{proposition}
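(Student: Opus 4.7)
The plan is to proceed by structural induction on the HCSP term $P$. For each primitive construct --- $\mathbf{skip}$, simultaneous assignment, continuous evolution, and $\mathbf{await}\ b$ --- the triple $\{\top\}P\{\sem{P}_{R,N,V}\}_V$ is one of the listed axioms, so the base cases are immediate for any $V\supseteq\VarA(P)$.

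For the inductive cases I would apply the induction hypothesis to each immediate subterm and then combine them using the corresponding proof rule, always instantiating the free-form preconditions to $\top$ and the postconditions to $\sem{\cdot}_{R,N,V}$. For $P;Q$, I would instantiate (seq) with $A=B=\top$, $G=\sem{P}_{R,N,V}$ and $H=\sem{Q}_{R,N,V}$; its side condition reduces to a DC validity because $\lceil N\wedge\neg R\rceil^0\chop B\chop\top$ holds on every reachable interval by splitting off a degenerate prefix. The resulting postcondition $\sem{P}_{R,N,V}\chop\lceil N\wedge\neg R\rceil^0\chop\sem{Q}_{R,N,V}$ is then $\sem{P;Q}_{R,N,V}$ by definition. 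Internal choice matches $(\sqcup)$ directly. For $\mathbf{if}\ b\ \mathbf{then}\ P\ \mathbf{else}\ Q$ I would first strengthen each branch's postcondition by adding the guard as a conjunct, using a short derivation via $({\bf K})$ and the locality principle, and then apply $(\mathbf{if})$; the result is $(b\wedge\sem{P}_{R,N,V})\vee(\neg b\wedge\sem{Q}_{R,N,V})=\sem{\mathbf{if}\ b\ \mathbf{then}\ P\ \mathbf{else}\ Q}_{R,N,V}$. The $(\mathbf{while})$ rule handles the loop case analogously, with its side condition collapsing under $A=\top$ since $\Diamond_l^c(\lceil\neg R\rceil^0\chop\top)$ is trivially true.

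For parallel composition I would invoke the induction hypothesis on $P_1$ and $P_2$ using fresh names $R_1$ and $R_2$ in place of $R$, obtaining $\{\top\}P_i\{\sem{P_i}_{R_i,N,\VarA(P_i)}\}_{\VarA(P_i)}$. Applying $(\|)$ with $A_i=\top$, the precondition $\Asplit(R,R_1,R_2,V,P_1,P_2)(\top)$ collapses to $\top$, and the disjunction emitted by the rule is literally the one appearing in (\ref{parsem}). The main obstacle I foresee is this parallel case: one must carefully track the $R_i$-substitutions so that the instances of $G_i$ supplied by the induction hypothesis agree, up to this renaming, with the $\sem{P_i}_{R_i,N,V_i}$ that appear in $\sem{P_1\|P_2}_{R,N,V}$, and then verify the syntactic match between the $\Esplit$-body produced by $(\|)$ and the denotational clause. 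A secondary, routine subtlety is the bookkeeping in the $\mathbf{if}$ case needed to promote $\{\top\}P\{\sem{P}_{R,N,V}\}_V$ to $\{b\}P\{b\wedge\sem{P}_{R,N,V}\}_V$ via $({\bf K})$; it relies on the convention that a bare state formula $b$ pins down the initial value of the reference interval, for which $\loc(V)$ together with the locality axioms suffices.
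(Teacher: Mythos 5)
Your proposal is correct and follows essentially the same route as the paper's own proof: structural induction on $P$, with the primitive constructs discharged by the axioms and each compound construct handled by one application of its rule with preconditions instantiated to $\top$ and postconditions to the denotational semantics. You in fact supply more detail than the paper does --- notably the verification that the side conditions of $(\mathbf{seq})$ and $(\mathbf{while})$ become DC validities under these instantiations, and the $({\bf K})$/$(N)$ bookkeeping needed in the $\mathbf{if}$ case, which the paper's one-line argument glosses over.
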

\begin{proof}
Induction on the construction of $P$. The triple (\ref{semtriple}) is an axiom for $P$ being the forms $\mathbf{skip}$,
$x_1,\ldots,x_n:=e_1,\ldots,e_n$, and $\langle{\mathcal F}(\dot x, x)=0\wedge b\rangle$. For processes $P$ of other forms, the induction step follows by single applications of the corresponding rules to the instances of (\ref{semtriple}) which are assumed to hold for $P$'s subprocesses.
\end{proof}

\begin{corollary}[relative completeness of the Hoare-style proof system]
Let $A$, $G$ and $P$ be such that (\ref{triplesem}) is valid. Then (\ref{htriple}) is derivable in the extension of the given proof system by all DC theorems.
\end{corollary}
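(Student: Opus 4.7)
The plan is to derive (\ref{htriple}) from Proposition \ref{prop1} by a single combined use of the rules $N$ and $\mathbf{K}$, together with the hypothesis (\ref{triplesem}) used as a DC axiom.

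First, Proposition \ref{prop1} supplies $\{\top\}P\{\sem{P}_{R,N,V}\}_V$. I would feed this into rule $\mathbf{K}$ as the $B$-premise, instantiating $B:=\top$, intermediate postcondition $\sem{P}_{R,N,V}$, and target postcondition $G$. The DC side condition of $\mathbf{K}$ then reduces to $\loc(V)\wedge\lceil R\Rightarrow N\rceil^0\wedge\Diamond_l^c A\Rightarrow\Diamond_l^c\top$, which is trivially valid. Consequently, $\mathbf{K}$ delivers $\{A\}P\{G\}_V$ once the remaining premise $\{A\}P\{\sem{P}_{R,N,V}\Rightarrow G\}_V$ is available.

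To obtain this remaining premise, I would apply rule $N$ to the DC formula
\[\loc(V)\wedge\lceil R\Rightarrow N\rceil^0\wedge\Diamond_l^c A\Rightarrow\bigl(\sem{P}_{R,N,V}\Rightarrow G\bigr),\]
whose validity will be read off the hypothesis (\ref{triplesem}). Concretely, the right-hand side of (\ref{triplesem}), read at an infinite initial extension of any given reference interval, states that $\sem{P}_{R,N,V}\Rightarrow G$ holds at every initial subinterval; specialising to the reference interval itself and exploiting that, modulo $\loc(V)$, $A\chop\top$ on the infinite extension is equivalent to $\Diamond_l^c A$ on the reference interval yields exactly what rule $N$ needs.

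The main technical point to pin down is this reformulation of the hypothesis. The hypothesis is validity at infinite intervals, whereas the premise of rule $N$ must be justified on arbitrary reference intervals. The argument embeds each reference interval into an infinite extension of the interpretation, chosen so that $\loc(V)$, $\lceil R\Rightarrow N\rceil^0$ and $\Diamond_l^c A$ continue to hold, applies the hypothesis there, and reads the conclusion back at the reference interval; this works because the evaluation of $\sem{P}_{R,N,V}$ and of $G$ at the reference interval is insensitive to what happens beyond its right endpoint. Once this is granted, Proposition \ref{prop1}, the single application of $N$, and the single application of $\mathbf{K}$ compose into a complete derivation of (\ref{htriple}).
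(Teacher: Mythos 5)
Your derivation coincides with the paper's own proof: rule $N$ yields $\{A\}P\{\sem{P}_{R,N,V}\Rightarrow G\}_V$, and rule $\mathbf{K}$ combined with Proposition~\ref{prop1} and the validity of $\Diamond_l^c A\Rightarrow\Diamond_l^c\top$ then gives (\ref{htriple}). The only difference is that you additionally spell out why the premise of $N$ follows from the hypothesis (\ref{triplesem}) --- the passage from validity at infinite intervals to arbitrary reference intervals --- a step the paper leaves implicit.
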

\begin{proof}
By $N$ we first derive
\[\{A\}P\{\sem{P}_{R,N,V}\Rightarrow G\}_V.\]
Now, using (\ref{semtriple}) from Proposition \ref{prop1}, the validity of $\Diamond_l^c A\Rightarrow\Diamond_l^c\top$ in DC and ${\bf K}$, we derive (\ref{htriple}).
\end{proof}


\section{General fixpoints and bottom-up proof search in HCSP}
\label{muSection}

To avoid the constraints on the form of the conclusion of rule $(\|)$, we propose a set of rules which correspond to the various possible forms of the operands of the designated $\|$ in the considered HCSP term. These rules enable bottom-up proof search much like when using the rules for (just) CSP constructs, which is key to the applicability of classical Hoare-style proof. We propose separate rules for each combination of main connectives in the operands of $\|$, except $\|$ itself and the fixpoint construct. For instance, the equivalence between (\ref{examplegnf0}) and (\ref{examplegnf}) suggests the following rule for this particular combination of $\|$ with the other connectives:
\[\frac{\begin{array}{l}
\{P\}\langle F_1(\dot x, x)=0\wedge F_2(\dot x, x)=0\wedge b_1\wedge b_2\rangle\{R\}\\
\{R\wedge b_1\wedge\neg b_2\}\langle F_1(\dot x, x)=0\wedge b_1\rangle;P_1\|P_2\{Q\}\\
\{R\wedge b_2\wedge\neg b_1\}P_1\|\langle F_2(\dot x, x)=0\wedge b_2\rangle;P_2\{Q\}\\
\{R\wedge\neg b_1\wedge\neg b_2\}P_1\|P_2\{Q\}
\end{array}}
{\{P\}\langle F_1(\dot x, x)=0\wedge b_1\rangle;P_1\|\langle F_2(\dot x, x)=0\wedge b_2\rangle;P_2\{Q\}}\]
Rules like the above one use the possibility to drive $\|$ inwards by equivalences like that between (\ref{examplegnf0}) and (\ref{examplegnf}), which can be derived for all combinations of the main connectives of $\|$'s operands, except for loops, and indeed can be used to eliminate $\|$. For $\mathbf{while}$-loops, the GNF contains a copy of the loop on the RHS of {\em chop}:
\begin{equation}\label{whilegnf}\mathbf{while}\ b\ \mathbf{do}\ P\equiv\mathbf{if}\ b\ \mathbf{then}\ (P;\mathbf{while}\ b\ \mathbf{do}\ P)\ \mathbf{else}\ \mathbf{skip}.
\end{equation}
Tail-recursive instances of $\mu X.G$ come handy in completing the elimination of $\|$ in such cases by standard means, namely, by treating equivalences such as (\ref{whilegnf}) as the equations leading to definitions such as (\ref{whiledef}).

To handle general recursion in our setting, we need to take care of the special way in which we treat $A$ from $\{A\}P\{G\}$ in (\ref{triplesem}). In the rule for $\{A\}\mathbf{while}\ b\ \mathbf{do}\ P\{G\}$ clipping of initial $G$-subintervals of an $A$-interval are supposed to leave us with suffix subintervals which satisfy $(A\chop \top)$, to provide for successive executions of $P$. With $X$ allowed on the LHS of {\em chop} in the $P$ of $\mu X.P$, special care needs to be taken for this to be guaranteed. To this end, instead of $(A\chop \top)$, $\Diamond_l^c A$ is used to state that $A$ holds at an interval that starts at the same time point as the reference one, and is not necessarily its subinterval. This is needed for reasoning from the viewpoint of intervals which accommodate nested recursive executions. The meaning of triples (\ref{htriple}) becomes
\begin{equation}\label{triplesemmu}
\textstyle
\models
\loc(V)\wedge\lceil R\Rightarrow N\rceil^0
\wedge \Diamond_l^c A\wedge \sem{P}_{R,N,V}\Rightarrow G.
\end{equation}
In this setting, $\mu X.P$ admits the proof rule, where $X\mbox{ does not occur in }A$:
\[(\mu)\qquad
\displaystyle
\frac{\loc(V)\wedge\lceil R\Rightarrow N\rceil^0\wedge\Diamond_l^c A\wedge G \Rightarrow[\Diamond_l^c A\wedge X/X]G
\qquad\{A\}P\{G\}_V}{\{A\}\mu X.P\{\mu X.G\}_V}
\]
This rule subsumes the one for $\mathbf{while}-\mathbf{do}$, but only as part of a suitably revised variant of the whole proof system wrt (\ref{triplesemmu}). E.g., the rule for sequential composition becomes
\[\displaystyle
\frac{
\begin{array}{l}
\{A\}P\{G\}_V
\\
\{B\}Q\{H\}_V\\
\loc(V)\wedge\lceil R\Rightarrow N\rceil^0\wedge\Diamond_l^c A\Rightarrow\neg (G\wedge\ell<\infty\chop \neg\Diamond_l^c(\lceil N\wedge\neg R\rceil^0 \chop B))
\end{array}
}{
\{A\}P;Q\{(G\chop \lceil N\wedge\neg R\rceil^0 \chop H)\}_V
}\]

\section{Related work}

Our work was influenced by the studies on DC-based reasoning about process-algebraic specification languages in \cite{ZH00,DBLP:journals/fac/HeX03,DBLP:conf/fm/YongG99,DBLP:conf/ifm/HaxthausenY00}. In a previous paper we proposed a calculus for HCSP \cite{DBLP:conf/aplas/LiuLQZZZZ10}, which was based on DC in a limited way and lacked compositionality.  In \cite{WZG12} and \cite{GWZZ13} we gave other variants of compositional and sound calculi for HCSP with different assertion and temporal condition formats. Completeness was not considered. The approach in \cite{GWZZ13} and in this paper is largely drawn from \cite{GD02} where computation time was proposed to be treated as negligible in order to simplify delay calculation, and the operator of projection was proposed to facilitate writing requirements with negligible time ignored. Hoare-style reasoning about real-time systems was also studied in the literature with explicit time logical languages \cite{DBLP:journals/fac/Hooman94}. However, our view is that using temporal logic languages is preferable. Dedicated temporal constructs both lead to more readable specifications, and facilitate the identification of classes of requirements that can be subjected to automated analysis.  Another approach to the verification of hybrid systems is Platzer's Differential Dynamic Logic \cite{DBLP:journals/jar/Platzer08}. However, the hybrid programs considered there have limited functionality. Communication, parallelism and interrupts are not handled. For logic compositionality, assume-guarantee reasoning has been studied for communication-based concurrency in CSP without timing in \cite{DBLP:journals/tse/MisraC81,DBLP:journals/dc/PandyaJ91}.

Both in our work and in alternative approaches such as \cite{DBLP:journals/tse/MisraC81}, the treatment of continuous evolution is somewhat separated from the analysis of the other basic process-algebraic constructs. Indeed we make a small step forward here by fully expressing the meaning of differential law-governed evolution in DC, which is theoretically sufficient to carry out all the relevant reasoning in the logic. Of course, the feasibility of such an approach is nowhere close to the state of art in the classical theory of ordinary differential equations. Indeed it would typically lead to formalized accounts of classical reasoning. Techniques for reasoning about the ODE-related requirements are the topic of separate studies, see, e.g., \cite{DBLP:conf/hybrid/PrajnaJ04,DBLP:conf/hybrid/SankaranarayananSM04,DBLP:conf/emsoft/LiuZZ11}.

\section*{Concluding remarks}

We have presented a weakly monotonic time-based semantics and a corresponding Hoare style proof system for HCSP with both the semantics and the temporal conditions in triples being in first-order DC with infinite intervals and extreme fixpoints. The proof system is compositional but the proof rule for parallel composition introduces complications because of the special form of the triples that it derives. However, we have shown that HCSP equivalences that can serve as elimination rules for $\|$ can also be used to derive proof rules for $\|$ which do not bring the above difficulty and indeed are perfectly compatible with standard bottom-up proof search. Interestingly, the informal reading of the derived rules for $\|$ together with the ones which are inherited from CSP, does not require the mention of weakly monotonic time technicalities. This means that the use of this special semantics can be restricted to establishing the soundness of practically relevant proof systems and awareness of its intricacies is not essential for applying the system. The meaning of triples we propose subsumes classical pre-/postcondition Hoare triples and triples linking (hybrid) temporal conditions in a streamlined way. This is a corollary of the choice to use assumptions which hold at an arbitrary initial subintervals, which is also compatible with reasoning about invariants $A$ in terms of statements of the form $(A\chop \top)\Rightarrow\neg(\sem{P}\wedge \neg(A\chop \top))$.

\bibliographystyle{plain}
\bibliography{rg}

\end{document}